\newtheorem{theorem}{Theorem}
\newtheorem{lemma}{Lemma}
\newtheorem{corollary}{Corollary}
\newtheorem{assumption}{Assumption}
\title{\LARGE \bf
% Hybrid Demand Model with Spreading Process Dynamics*
Hybrid SIS Dynamics for Demand Modeling \\ of Frequently Updated Products
}
\author{Ian Walter$^{1}$, Jitesh H. Panchal$^{2}$, and Philip E. Par\'e$^{1}$% <-this % stops a space
\thanks{*This work was partially supported by the National Science Foundation, grant NSF-ECCS \#2238388.}% <-this % stops a space
\thanks{$^{1}$Ian Walter and Philip E. Par\'e are with the Elmore Family School of Electrical and Computer Engineering, Purdue University, West Lafayette, IN 47907, USA 
        {\tt\small walteri@purdue.edu, phil.pare@purdue.edu}}%
\thanks{$^{2}$Jitesh H. Panchal is with the School of Mechanical Engineering, Purdue University, West Lafayette, IN 47907, USA
        {\tt\small panchal@purdue.edu}}%
}
\begin{document}

\maketitle
\thispagestyle{empty}
\pagestyle{empty}

%%%%%%%%%%%%%%%%%%%%%%%%%%%%%%%%%%%%%%%%%%%%%%%%%%%%%%%%%%%%%%%%%%%%%%%%%%%%%%%%
\begin{abstract}
We propose a hybrid spreading process model to capture the dynamics of demand for software-based products. We introduce discontinuous jumps in the state to model sudden surges in demand that can be seen immediately after a product update is released. After each update, the modeled demand evolves according to a continuous-time susceptible-infected-susceptible (SIS) epidemic model. We identify the necessary and sufficient conditions for estimating the hybrid model's parameters for an arbitrary finite number of sequential updates. We verify the parameter estimation conditions in simulation, and evaluate how the estimation of these parameters is impacted by the presence of observation and process noise. We then validate our model by applying our estimation method to daily user engagement data for a regularly updating software product, the live-service video game `Apex Legends.' 
\end{abstract}

%%%%%%%%%%%%%%%%%%%%%%%%%%%%%%%%%%%%%%%%%%%%%%%%%%%%%%%%%%%%%%%%%%%%%%%%%%%%%%%%
\section{Introduction}

% \edit{
In recent years, the rapid growth of digital products, particularly in sectors such as video games and software services, has emphasized the need for effective models to capture and predict changes in user demand.
A prominent feature of many modern software systems, especially live-service products, is the release of regular updates aimed at improving functionality, fixing issues, or adding new content. 
These updates often result in discontinuous jumps in demand, driven by heightened user interest and engagement immediately following an update's release. 
Understanding and modeling this behavior is crucial for developers to effectively plan updates, allocate resources, and enhance the user's experience.

One approach for capturing this behavior is epidemic modeling~\cite{doi:10.1177/1075547095016003002, rogers2010diffusion, doi:10.1073/pnas.1004098107, 8896074, 10.1007/s10588-006-9007-2, 10.1007/s10100-011-0210-y, 4549746, 6782297}.
How innovations, or products, spread has been studied since the works of Tarde~\cite{de1903laws}.
How innovations spread over social networks has been an area of particular emphasis in the last two decades~\cite{10.1007/s10100-011-0210-y, 10.1007/s10588-006-9007-2, doi:10.1073/pnas.1004098107, 4549746, 6782297}.
For instance, She \textit{et al.}\ explored how antagonistic relationships between individuals influence innovation diffusion in a network~\cite{9693279}.
To analyze the diffusion of innovation in large populations, researchers often employ mean-field approximations of networked diffusion processes% adapted from epidemic models
~\cite{GUSEO2009806}.

In order to capture the discontinuous effect that product updates can have on the demand, we leverage a hybrid system model which incorporates spreading process dynamics.
Liu \textit{et al.} present a switched system model implementing % susceptible-exposed-infected-recovered (SEIR)
SEIR model dynamics~\cite{liu2010lasalle}.
The SEIR model does not allow for an endemic equilibrium to be reached, which means that the infected population must die out.
% The switching behavior they analyze is the change in parameters due to pulsed 
Hence, the SEIR model is not suitable to model the dynamics of a system that should persist, such as demand for live-service products.
Additionally, their proposed model can only change the dynamics at each discrete event.
While an update to a product may change the model parameters, it can also induce an impulse of demand, instantaneously changing the state.

We present a novel hybrid epidemic model leveraging continuous-time SIS dynamics and discrete impulse events, which cause the state and spreading parameters to change.
A discrete-time approximation of the hybrid model is then introduced, to enable parameter estimation.
We then present an estimation technique and analyze the necessary and sufficient conditions.
The estimation technique is tested on synthetic data generated with the hybrid model, and its sensitivity to noise is evaluated.
% The parameter estimation technique  and validate/test it...
We leverage the estimation technique to validate the model %'s practical application 
by estimating parameters and predicting the future state of real data, which consists of daily user counts for an online game.

The rest of the paper is organized as follows.
In Section~\ref{sec:model}, we present the novel hybrid model and its discrete-time approximation. % leveraging continuous-time SIS dynamics and discrete impulse events, which cause the state and spreading parameters to change.
%The basic properties and assumptions of the hybrid model are given in 
 % and \edit{present} the discrete-time approximation of the hybrid system model is presented.
The necessary and sufficient conditions for estimating the model parameters are analyzed in Section~\ref{sec:main_results}.
The user-count dataset we use to evaluate the practical application of the model is presented in Section~\ref{sec:data}.
Simulations verifying the estimation technique are presented in Section~\ref{sec:simulation}, followed by an application of this system to demand fitting for our dataset.
This paper concludes with some remarks in Section~\ref{sec:conclusion}.
% }

\subsection{Notation}
% \hl{Pulled most of this from the bi-virus paper}

For any positive integer $n$, we use $[n]$ to denote the set $\{1,2,\dots,n\}$.
We use $x'$ to denote the transpose of a vector $x$ and $A'$ to denote the transpose of a matrix $A$.
The $i$th entry of a vector $x$ is denoted by $x_i$, and the $ij$th entry of matrix $A$ will be denoted by $a_{ij}$.
We define $t^-$ as $\sup \{t^-\in \mathbb{R} : t^-<t\}$. %the value immediately before $t$.
We use $\lceil{x}\rceil$ to denote the ceiling function, which returns $\min\{y\in\mathbb{N}:y\geq x %\frac{x}{h}
\}$ where $\mathbb{N}$ is the set of natural numbers, including zero. %$x$ rounded to the next largest integer. \hl{Do I need to add the formal definition (returns min(n>x, n in R)..?}

% \hl{Define $\mathbf{0}$ as a matrix full of 0s.}

% \hl{Add $x^+$ defn.}

% \hl{Add $\lceil{t}\rceil$ defn}

\section{Model}\label{sec:model}
% \hl{Do we use the data to motivate the model?}
We propose the following hybrid model. %(switched system or hybrid model?)
Consider a population of $N$ individuals who are potential users of a product.
Assume that users can freely start and stop using the product at any time.
The number of currently active users is given by the state $I\in\{0\}\bigcup [N]$, and the proportion of currently active users is given by the state $x=I/N$.
% The number of currently active users, $I$, is bounded by $0\leq I\leq N$, and the proportion of currently active users is bounded by $\$
Assume the product features and customer preferences change only at discrete times, called update times.
We denote the $i$th update time by $t_i$. 
% The $0$th update time, $t_0$, is zero.
% The $1$st update time, $t_1$, 
We assume there are a total of $m$ updates.
Each update occurs at some time after the previous update, that is, $0<t_1<\dots<t_m$.
We use $t_0$ to denote the initial time and $t_{m+1}$ to denote the final time.
% The set of update times is defined as $\mathcal{T}=\{t_0,\dots,t_m\}$.
After each update, $x$ evolves based on the spreading process dynamics
\begin{subequations}\label{eq:ct_hybrid_dynamics}
\begin{align}\label{eq:ct_sis_dynamics}
    \dot{x} = \beta_i(1-x)x  - \gamma_i x,
\end{align}
where $\beta_i$ is the engagement (infection) rate and $\gamma_i$ is the disinterest (healing) rate corresponding to interval $i\in\{0\}\bigcup [m+1]$. % are the spreading process parameters after the $n$th update.
At the time of an update, $t\in\{t_1,t_2,\dots,t_m\}$, there is a discontinuous shift in user engagement according to the rule
\begin{align}\label{eq:ct_alpha_dynamics}
    x(t) = (1+\alpha_i)x(t^-) ,
\end{align}
\end{subequations}
where %$x(t^-)$ is the population immediately preceding the update and 
$\alpha_i$ is the scaling factor for $i\in [m]$. %the $n$th update.
% The dynamics of the hybrid model over the $m$ updates are defined below:
% \begin{align}\label{eq:ct_hybrid_dynamics}
%     \dot{x} = \begin{cases}
%         \beta_0(1-x)x  - \gamma_0 x \quad & t_0\leq t < t_1 \\
%         \beta_1(1-x)x  - \gamma_1 x \quad & t_1\leq t < t_2 \\
%         \vdots & \\
%         % \beta_{m+1}(1-x)x  - \gamma_{m+1} x \quad & 0\leq t < T_1
%         \beta_{m}(1-x)x  - \gamma_{m} x \quad & t_m\leq t \leq t_{m+1} .
%     \end{cases}
% \end{align}
An example graph representation of this hybrid model is shown in Figure~\ref{fig:system-graph} for $m=2$ updates.

\begin{figure}[t]
    \centering
    \includegraphics[width=0.9\columnwidth]{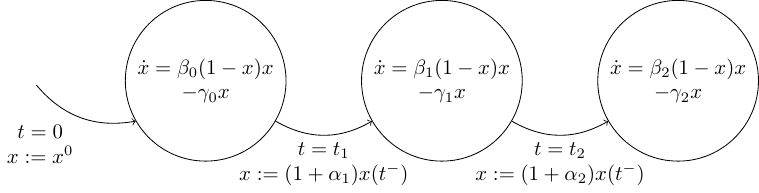}
    \caption{Example system graph for a system with two update events occurring at times $t_1$ and $t_2$. %In this figure, $x^-:=x(t^-)$. 
    The system starts in the first (left) state at time $t_0=0$ with initial state $x^0$, with the continuous dynamics  $\dot{x}=\beta_0(1-x)x - \gamma_0 x$. At time $t=t_1$ the value of state $x$ instantaneously changes to $(1+\alpha_1)x^-$, where $x^-=x(t^-)$. %, the state at the time immediately before $t$.
    The dynamics also change to use the new parameters $\beta_1, \gamma_1$.
    At time $t=t_2$, the value of state $x$ is set to $(1+\alpha_2)x^-$, and the state dynamics change to use parameters $\beta_2, \gamma_2$.}
    \label{fig:system-graph}
\end{figure}

% \hl{Add discretized version of the dynamics}

% \hl{Modify the below paragraph for discrete time}

All data is sampled, including the user data in this paper; see Section~\ref{sec:data}.
Therefore, we define a discrete-time model approximation of the above model for the parameter estimation and model validation.
% We now define the discrete-time approximation of the above model.
% Consider a population of $N$ individuals who are potential users of a product.
% Assume that users can free start and stop using the product at any time.
% The number of currently active users is given by the state $I$, and the proportion of currently active users is given by the state $x$.
% Assume the product features and customer preferences change only at discrete times, called update times.
We denote the time-step of the $i$th update by $T_i$, where $T_i=\lceil{t_i/h}\rceil$. 
% The $0$th update time-step, $T_0$, is zero.
We assume there are a total of $m$ updates, and use $T_0$ and $T_{m+1}$ to denote the initial and final time steps, respectively.
Each update occurs at some time after the previous update, that is, $0<T_1<\dots<T_m$.
The set of update times is defined as $\mathcal{T}=\{T_1,\dots,T_m\}$. 
We obtain the discrete-time approximation of the process dynamics following each update event by applying the Euler method~\cite{atkinson1991introduction} % Cite the "An Introduction to Numerical Analysis" here
to~\eqref{eq:ct_sis_dynamics}:
\begin{subequations}\label{eq:dt_hybrid_dynamics}
\begin{align}\label{eq:dt_sis_dynamics}
    x^{k+1} = x^k+h(\beta_i(1-x^k)x^k  - \gamma_i x^k).
\end{align}
% where $\beta_i,\gamma_i$ are the spreading process parameters after the $n$th update.
At the time-step preceding an update, $k=T_i-1$, the state changes by
\begin{align}\label{eq:dt_alpha_dynamics}
    x^{k+1} = (1+\alpha_i)x^{k},
\end{align}
\end{subequations}
where $\alpha_i$ is the scaling factor for $i\in [m]$. %the $n$th update.
There is no dependence on $h$ in~\eqref{eq:dt_alpha_dynamics} as we assume the impulse occurs instantaneously.
% Combining the dynamics, defined in~\eqref{eq:dt_sis_dynamics}, with the shift in engagement at each update, defined in~\eqref{eq:dt_alpha_dynamics}, we define the dynamics of the discrete-time approximation over $m$ updates:
% \begin{align}\label{eq:dt_hybrid_dynamics}
%     x^{k+1} = \begin{cases}
%         x^k + h(\beta_0(1-x^k)x^k  - \gamma_0 x^k) \quad & T_0 \leq k < T_1 - 1 \\
%         x^k + \alpha_1x^{k} & k = T_1 - 1 \\
%         x^k + h(\beta_1(1-x^k)x^k  - \gamma_1 x^k) \quad & T_1 \leq k < T_2 - 1 \\
%         x^k + \alpha_2x^{k} & k = T_2 - 1 \\
%         \quad \ \vdots & \\
%         % \beta_{m+1}(1-x)x  - \gamma_{m+1} x \quad & 0\leq t < T_1
%         x^k + \alpha_mx^{k} & k = T_m - 1 \\
%         x^k + h(\beta_{m}(1-x^k)x^k  - \gamma_{m} x^k) \quad & T_m\leq k \leq T_{m+1} .
%     \end{cases}
% \end{align}
For the parameter estimation results that follow, we use the discretized model in~\eqref{eq:dt_hybrid_dynamics}.

% Two things at play: impulse response and change in asymptote

% Impulse response - burst in user engagement due to advertisements and `hype'

% shift in asymptote (reproduction number) - 

% To determine whether the dynamic parameters can be estimated, we follow a well-known approach \cite{estimation2020pare, estimating2020vrabac}.

\section{Parameter Estimation}\label{sec:main_results}
% \hl{Add transition here}
Now we present conditions under which we can estimate the discrete-time model parameters from data.
We impose the following assumptions on the model in~\eqref{eq:dt_hybrid_dynamics}.
\begin{assumption}\label{assum:update_times}
    The sequence of $m$ updates occurs at known time indices $T_1,\dots,T_m$, where $T_0<T_1<\dots<T_m<T_{m+1}$. 
    %Let $T_0=0$ be the initial time-step and $T_{m+1}>T_{m}$ be the final time-step.
\end{assumption}

\begin{assumption}\label{assum:states}
    % For all $i\in \{1,\dots,m\}$, $h$ and $x^k$ is known for all $k\in \{T_i-1, T_i, \dots, T_{i+1}-1\}$.
    For all $k\in \{0\}\bigcup[T_{m+1}]$, $x^k$ and $h$ are known.
\end{assumption}

% \begin{align*}
%     T_{i+1}-2 - (T_i-1) &> 2 \\
%     T_{i+1}-2 - T_i+1 &> 2  \\
%     T_{i+1}-1 - T_i &> 2 \\
%     T_{
% \end{align*}
\begin{lemma}\label{lem:update_interval}
\begin{subequations}
Consider the model in \eqref{eq:dt_hybrid_dynamics} under Assumptions~\ref{assum:update_times} and~\ref{assum:states}. 
Then, $\alpha_i$, $\beta_i$, and $\gamma_i$, corresponding to any update $i\in[m]$, can be uniquely identified if and only if $T_{i+1}-T_{i}>2$,  % If anyone asks why 2, it's because the final data point we use isn't $T_{i+1}$ but $T_{i+1}-1$.
there exist $k_1, k_2\in \{T_{i}, T_i+1, \dots, T_{i+1}-2\}$ such that
\begin{equation}\label{eq:lemma_condition}
    x^{k_1}(1-x^{k_2})x^{k_2} \neq x^{k_2}(1-x^{k_1})x^{k_1},
\end{equation}
and 
\begin{equation}
    x^{T_i-1}\neq 0.
\end{equation}
\end{subequations}
\end{lemma}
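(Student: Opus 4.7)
The plan is to split the argument into the two implications of the biconditional, isolating the impulse parameter $\alpha_i$ from the flow parameters $(\beta_i,\gamma_i)$, since these are determined by disjoint equations in the model \eqref{eq:dt_hybrid_dynamics}.

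For the sufficient direction, I would first identify $\alpha_i$ directly. Since update $i$ occurs at index $T_i$, equation \eqref{eq:dt_alpha_dynamics} evaluated at $k = T_i - 1$ yields $x^{T_i} = (1+\alpha_i)x^{T_i - 1}$. Under Assumption~\ref{assum:states} both $x^{T_i}$ and $x^{T_i-1}$ are known, and the hypothesis $x^{T_i-1}\neq 0$ allows me to divide and solve for $\alpha_i$ uniquely. Next, I would recover $\beta_i$ and $\gamma_i$ from two instances of \eqref{eq:dt_sis_dynamics}. The flow dynamics apply at every index $k$ such that $k+1$ is not an update time; given Assumption~\ref{assum:update_times}, this means $k \in \{T_i, T_i+1, \dots, T_{i+1}-2\}$. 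The hypothesis $T_{i+1} - T_i > 2$ guarantees at least two distinct indices in this set, so I may pick $k_1, k_2$ satisfying \eqref{eq:lemma_condition} and assemble the linear system
\begin{equation*}
\begin{bmatrix} (1-x^{k_1})x^{k_1} & -x^{k_1} \\ (1-x^{k_2})x^{k_2} & -x^{k_2} \end{bmatrix}
\begin{bmatrix} \beta_i \\ \gamma_i \end{bmatrix}
=
\frac{1}{h}\begin{bmatrix} x^{k_1+1}-x^{k_1} \\ x^{k_2+1}-x^{k_2} \end{bmatrix}.
\end{equation*}
A direct determinant computation gives $x^{k_2}(1-x^{k_1})x^{k_1} - x^{k_1}(1-x^{k_2})x^{k_2}$, which is nonzero by \eqref{eq:lemma_condition}, so the matrix is invertible and $(\beta_i,\gamma_i)$ is uniquely determined.

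For the necessary direction, I would contrapose each condition. If $x^{T_i-1}=0$, then $x^{T_i}=(1+\alpha_i)\cdot 0 = 0$ regardless of $\alpha_i$, so every value of $\alpha_i$ is consistent with the data and identification fails. If $T_{i+1}-T_i \leq 2$, the set $\{T_i,\dots,T_{i+1}-2\}$ contains at most one index, yielding at most one scalar equation in two unknowns $\beta_i,\gamma_i$, so a one-parameter family of solutions exists. If instead $T_{i+1}-T_i>2$ but \eqref{eq:lemma_condition} fails for every pair $k_1,k_2$ in the interval, then the rows of the coefficient matrix above are linearly dependent for every pair, so the stacked linear system has rank at most one, and again $(\beta_i,\gamma_i)$ is not uniquely determined.

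I do not anticipate a serious technical obstacle: the main substance is a $2\times 2$ linear-algebra argument. The step that warrants the most care is bookkeeping the index ranges—specifically, verifying that the flow equation \eqref{eq:dt_sis_dynamics} genuinely applies only for $k \in \{T_i,\dots,T_{i+1}-2\}$ (because $k=T_{i+1}-1$ is governed by the impulse rule \eqref{eq:dt_alpha_dynamics}), and that $\alpha_i$ is decoupled from $(\beta_i,\gamma_i)$ so that the three parameters can be treated in separate sub-arguments without circularity.
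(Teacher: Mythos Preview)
Your proposal is correct and follows essentially the same line as the paper's proof: the paper packages the impulse equation and all flow equations into a single block-diagonal matrix $\Phi_i=\mathrm{diag}(\Phi_i^0,\Phi_i^1)$ and reduces unique identifiability to full column rank of each block, whereas you treat $\alpha_i$ and $(\beta_i,\gamma_i)$ in separate sub-arguments and pick an explicit $2\times 2$ submatrix whose determinant is exactly the expression in \eqref{eq:lemma_condition}. The decoupling, the index bookkeeping, and the rank/determinant computation are the same in both versions.
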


\begin{proof}
    Since $x^k$ and $h$, for all $k\in \{T_i-1, T_i, \dots, T_{i+1}-2\}$, are known, we can rewrite~\eqref{eq:dt_hybrid_dynamics}, for one update $i\in[m]$, as
    \begin{align}\label{eq:discrete_spread_process_matrix}
        \begin{bmatrix}
            x^{T_{i}} - x^{T_{i}-1} \\
            \vdots \\
            x^{T_{i+1}-1}-x^{T_{i+1}-2}
        \end{bmatrix} = \Phi_i \begin{bmatrix}
            \alpha_{i} \\
            \beta_{i} \\
            \gamma_{i} 
        \end{bmatrix},
    \end{align}
    where, from~\eqref{eq:dt_sis_dynamics} and~\eqref{eq:dt_alpha_dynamics}, %we can construct the matrix $\Phi_i$:
    \begin{align}\label{eq:phi_n_def}
        \Phi_i = \begin{bmatrix}
            x^{T_i-1} & 0 & 0 \\
            0 & h(1-x^{T_{i}})x^{T_{i}} & -hx^{T_{i}} \\
            \vdots & \vdots & \vdots \\
            0 & h(1-x^{T_{i+1}-2})x^{T_{i+1}-2} & -hx^{T_{i+1}-2}
        \end{bmatrix}.
    \end{align}
    We can rewrite $\Phi_i$ as the block diagonal matrix
    \begin{align}\label{eq:phi_n_block_def}
        \Phi_i = \begin{bmatrix}
            \Phi^0_i & \mathbf{0}  \\
            \mathbf{0} & \Phi^1_i
        \end{bmatrix},
    \end{align}
    where %$\Phi_i^0,\Phi_i^1$ are defined as
    \begin{subequations}
    \begin{align}\label{eq:phi_n_blocks}
        \Phi_i^0 &= \begin{bmatrix}
            x^{T_i-1}
        \end{bmatrix} \\
        \Phi_i^1 &= \begin{bmatrix}
            h(1-x^{T_{i}})x^{T_{i}} & -hx^{T_{i}} \\
             \vdots & \vdots \\
            h(1-x^{T_{i+1}-2})x^{T_{i+1}-2} & -hx^{T_{i+1}-2}
        \end{bmatrix}.
    \end{align}
    \end{subequations}
    % Therefore, since we also know $x^{T_{i+1}-1}$ and $h$, we can rewrite \eqref{eq:discrete_spread_process} as
    % \begin{align}\label{eq:discrete_spread_process_matrix}
    %     \begin{bmatrix}
    %         x^{T_{i}} - x^{T_{i}-1} \\
    %         \vdots \\
    %         x^{T_{i+1}-1}-x^{T_{i+1}-2}
    %     \end{bmatrix} = \Phi_i \begin{bmatrix}
    %         \alpha_{i} \\
    %         \beta_{i} \\
    %         \gamma_{i} 
    %     \end{bmatrix}.
    % \end{align}
    Note that from~\eqref{eq:discrete_spread_process_matrix}, we can uniquely identify the parameters  $\alpha_i$, $\beta_i$, and $\gamma_i$ if and only if $\Phi_i$ is full column rank.
    Further, the block diagonal matrix $\Phi_i$ is full column rank if and only if all block matrices are full column rank.

    If $x^{T_i-1}\neq 0$, $\Phi_i^0$ has a column rank of one, with one column.
    If $x^{T_i-1}=0$, then $\Phi_i^0$ has a column rank of zero. % and thus has a nontrivial nullspace, resulting in no unique solution to~\eqref{eq:discrete_spread_process_matrix}.
    
    % The proof of 
    Since $T_{i+1}-T_{i}>2$, $\Phi_i^1$ has at least two rows.
    By the assumption that there exists $k_1, k_2\in \{T_{i}, T_i+1, \dots, T_{i+1}-2\}$ such that~\eqref{eq:lemma_condition} holds, $\Phi_i^1$ has a column rank equal to two with two columns.
    % Therefore there exists a unique solution to \eqref{eq:discrete_spread_process_matrix} using the inverse or pseudoinverse.
    If there do not exist $k_1, k_2\in \{T_{i}, T_i+1, \dots T_{i+1}-2\}$ such that~\eqref{eq:lemma_condition} holds, the rank is less than two. %Therefore, $\Phi_i^1$ has a nontrivial nullspace and \eqref{eq:discrete_spread_process_matrix} does not have a unique solution.
    Therefore, we have proved the result.

    % \edit{Thus, if $x^{T_i-1}\neq 0$ and }
\end{proof}

If we were to assume that the state $x^k$ never dies out, that is, $x^k>0$ for all $k\in \{T_{i}-1, T_i, \dots, T_{i+1}-2\}$, then the inequality condition in~\eqref{eq:lemma_condition} %in Lemma~\ref{lem:update_interval} 
can be relaxed:
\begin{corollary}
    Consider the model in \eqref{eq:dt_hybrid_dynamics}, under Assumptions~\ref{assum:update_times} and~\ref{assum:states}, and 
    assume $x^k>0$, for all $k\in \{T_{i}-1, T_i, \dots, T_{i+1}-2\} $. % then the condition in~\eqref{eq:lemma_condition} becomes
    Then, $\alpha_i$, $\beta_i$, and $\gamma_i$, corresponding to any update $i\in[m]$, can be uniquely identified if and only if $T_{i+1}-T_{i}>2$,  % If anyone asks why 2, it's because the final data point we use isn't $T_{i+1}$ but $T_{i+1}-1$.
    and there exist $k_1, k_2\in \{T_{i}, T_i+1, \dots, T_{i+1}-2\}$ such that
    \begin{equation}\label{eq:cor_condition}
        x^{k_1}\neq x^{k_2}.
    \end{equation}
    % then 
\end{corollary}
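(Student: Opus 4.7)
The plan is to derive the corollary directly from Lemma~\ref{lem:update_interval} by showing that, under the added strict-positivity hypothesis, the two conditions listed in Lemma~\ref{lem:update_interval} that involve the state values collapse to the single, simpler condition in~\eqref{eq:cor_condition}. Since Assumptions~\ref{assum:update_times} and~\ref{assum:states} are inherited, and the structural requirement $T_{i+1}-T_i>2$ is kept verbatim, the only work is to translate the inequality~\eqref{eq:lemma_condition} into~\eqref{eq:cor_condition} and to observe that the auxiliary requirement $x^{T_i-1}\neq 0$ is automatic.

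First I would invoke Lemma~\ref{lem:update_interval} to reduce the identifiability question to checking (i) $T_{i+1}-T_i>2$, (ii) the existence of $k_1,k_2$ satisfying~\eqref{eq:lemma_condition}, and (iii) $x^{T_i-1}\neq 0$. Item (iii) is immediate from the hypothesis $x^k>0$ for all $k\in\{T_i-1,T_i,\dots,T_{i+1}-2\}$, since that set contains $T_i-1$.

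Next I would rearrange the inequality in~\eqref{eq:lemma_condition}. Moving both terms to one side and collecting common factors gives
\begin{equation*}
    x^{k_1}(1-x^{k_2})x^{k_2} - x^{k_2}(1-x^{k_1})x^{k_1}
    = x^{k_1}x^{k_2}\bigl(x^{k_1}-x^{k_2}\bigr).
\end{equation*}
Under the standing positivity hypothesis, $x^{k_1}x^{k_2}>0$, so the product on the right is nonzero precisely when $x^{k_1}\neq x^{k_2}$. Hence~\eqref{eq:lemma_condition} is equivalent to~\eqref{eq:cor_condition} for each candidate pair $(k_1,k_2)$, and the existence of a pair satisfying one is equivalent to the existence of a pair satisfying the other.

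Combining these observations yields the "if" direction, and the "only if" direction follows by contrapositive: if $T_{i+1}-T_i\leq 2$ or all states in the relevant window coincide, then by Lemma~\ref{lem:update_interval} the matrix $\Phi_i$ in~\eqref{eq:phi_n_def} fails to be full column rank, so the parameters cannot be uniquely identified. The main obstacle is essentially just the algebraic factorization above, which is routine; the bulk of the technical work is already carried out in the proof of Lemma~\ref{lem:update_interval}.
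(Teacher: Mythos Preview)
Your proposal is correct and follows essentially the same approach as the paper: both reduce Lemma~\ref{lem:update_interval}'s condition~\eqref{eq:lemma_condition} to~\eqref{eq:cor_condition} via the positivity hypothesis, the paper by dividing through by $x^{k_1}x^{k_2}$ and your version by the equivalent factorization $x^{k_1}x^{k_2}(x^{k_1}-x^{k_2})$. Your write-up is somewhat more explicit than the paper's one-line proof in that you separately note $x^{T_i-1}\neq 0$ is automatic and spell out both directions of the biconditional, but the underlying argument is the same.
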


\begin{proof}
Since, by assumption, $x^{k_1},x^{k_2}>0$, we can divide both sides of~\eqref{eq:lemma_condition} by~${x^{k_1}x^{k_2}}$, and after some basic rearranging we are left with~\eqref{eq:cor_condition}.
\end{proof}

% For $0\leq x^{k_1},x^{k_2}\leq 1$, the only case that~\eqref{eq:lemma_condition} would be violated while~\eqref{eq:cor_condition} would hold true is if either $x^{k_1}$ or $x^{k_2}$ is zero.
% Thus, if we assume that all values of $x^k$ are strictly greater than zero, we can simplify the condition in~\eqref{eq:lemma_condition} to~\eqref{eq:cor_condition}.

Recall that, immediately prior to update $T_i$, the state $x^k$ experiences an impulse, with the temporary dynamics defined as $x^{k+1}=(1+\alpha_i) x^k$ for $k=T_i-1$.
After each update, $x^k$ evolves according to the SIS dynamics with the new pair of parameters, $\beta_i,\gamma_i$.
Note that for the first interval, there is no impulse prior to $T_0$ and thus no $\alpha_0$.
Utilizing the result from Lemma~\ref{lem:update_interval}, we have the following theorem.
% \edit{Combining} the results from Lemma~\ref{lem:update_interval} and Corollary~\ref{cor:final_interval_reqs}, we have the following theorem.
\begin{theorem}\label{thm:model_estimation}
\begin{subequations}
Consider the model in \eqref{eq:dt_hybrid_dynamics}, under Assumptions~\ref{assum:update_times} and~\ref{assum:states}. 
The hybrid model parameters $\Theta=[\beta_{0}\  
        \gamma_{0} \ 
        \alpha_1 \ 
        \dots \ 
        \alpha_{m} \ 
        \beta_{m} \ 
        \gamma_{m} ]^T$
can be uniquely identified, if and only if, the following hold: \\
for $i=0$:
\begin{enumerate}
    \item $T_{1}>2$,
    \item there exist $k_1, k_2\in \{0, 1, \dots ,T_{1}-2\}$ such that
    \begin{equation}\label{eq:thm_0_ineq_assump}
        x^{k_1}(1-x^{k_2})x^{k_2} \neq x^{k_2}(1-x^{k_1})x^{k_1};
    \end{equation}
\end{enumerate}
for all $i\in[m-1]$:  
\begin{enumerate}
    \item $T_{i+1}-T_{i}>2$,
    \item there exist $k_1, k_2\in \{T_{i}, T_i+1, \dots ,T_{i+1}-2\}$ such that
    \begin{equation}\label{eq:thm_n_ineq_assump}
        x^{k_1}(1-x^{k_2})x^{k_2} \neq x^{k_2}(1-x^{k_1})x^{k_1},\text{ and}
    \end{equation}
    \item $x^{T_i-1}\neq 0$;
\end{enumerate}
and for $i=m$: 
\begin{enumerate}
    \item $T_{m+1}-T_{m}>1$,
    \item there exist $k_1, k_2\in \{T_{m}, T_m+1, \dots ,T_{m+1}-1\}$ such that
    \begin{equation}\label{eq:thm_m_ineq_assump}
        x^{k_1}(1-x^{k_2})x^{k_2} \neq x^{k_2}(1-x^{k_1})x^{k_1}, \text{ and}
    \end{equation}
    \item $x^{T_m-1}\neq 0$.
\end{enumerate}
% \hl{I rearranged these to match the order of the proof, so it goes i=0, then conditions for i=[m-1], then i=m}
\end{subequations}
\end{theorem}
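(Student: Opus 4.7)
The plan is to extend the argument of Lemma~\ref{lem:update_interval} from a single interval to the full sequence of $m+1$ intervals by writing the full parameter recovery problem as a block-diagonal linear system that decouples across intervals, and then invoking Lemma~\ref{lem:update_interval} one block at a time.

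First I would stack the $T_{m+1}$ state-difference equations from~\eqref{eq:dt_hybrid_dynamics} into a single linear system $y = \Phi \Theta$, where $y$ collects $x^{k+1}-x^k$ for $k=0,\dots,T_{m+1}-1$ and $\Phi$ is built by concatenating the per-interval blocks exhibited in~\eqref{eq:phi_n_def}. The key structural observation is that $\alpha_i$ appears only in the impulse row at $k=T_i-1$, while $(\beta_i,\gamma_i)$ appear only in the SIS rows on interval $i$. Consequently $\Phi$ is block-diagonal, with one block per interval, and $\Theta$ is uniquely identifiable from $y=\Phi\Theta$ if and only if every diagonal block has full column rank.

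Next I would classify the blocks and apply Lemma~\ref{lem:update_interval} block-by-block. For each interior update $i\in[m-1]$, the block coincides exactly with $\Phi_i$ of Lemma~\ref{lem:update_interval}, so the three listed conditions (the length bound $T_{i+1}-T_i>2$, the inequality~\eqref{eq:thm_n_ineq_assump}, and $x^{T_i-1}\neq 0$) are equivalent by Lemma~\ref{lem:update_interval} to that block having rank three. For the initial interval $i=0$, no impulse precedes $T_0$, so no $\alpha_0$ appears and the block reduces to a $\Phi_0^1$-type submatrix whose SIS rows are indexed by $k\in\{0,\dots,T_1-2\}$; reusing only the $\Phi_i^1$ portion of the argument in Lemma~\ref{lem:update_interval} then yields rank two if and only if $T_1>2$ and~\eqref{eq:thm_0_ineq_assump} holds. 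For the terminal interval $i=m$, the block has the same impulse-plus-SIS structure as in Lemma~\ref{lem:update_interval}, but since there is no subsequent update to truncate the data, the SIS rows extend through $k=T_{m+1}-1$; this produces the wider index set $\{T_m,\dots,T_{m+1}-1\}$ in~\eqref{eq:thm_m_ineq_assump} and weakens the length requirement to $T_{m+1}-T_m>1$, as two SIS rows plus the single impulse row are enough to ensure rank three.

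Collecting these per-block characterizations, $\Phi$ has full column rank if and only if all of the listed conditions hold simultaneously, which establishes both directions of the theorem. The main technical point is not any substantial calculation but rather the bookkeeping at the two boundary intervals: carefully verifying that the absence of $\alpha_0$ and the absence of a trailing update after $T_{m+1}$ are correctly reflected in the row counts of the corresponding blocks and in the index ranges appearing in~\eqref{eq:thm_0_ineq_assump} and~\eqref{eq:thm_m_ineq_assump}.
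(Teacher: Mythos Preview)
Your proposal is correct and follows essentially the same route as the paper: you form the global linear system, observe its block-diagonal structure, reduce unique identifiability to full column rank of each block, and then handle the interior blocks via Lemma~\ref{lem:update_interval} while treating the two boundary intervals separately with the appropriate row-index adjustments. The only cosmetic difference is that the paper cites an external result for the rank of the $i=0$ block rather than reusing the $\Phi_i^1$ part of Lemma~\ref{lem:update_interval} as you do.
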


\begin{proof}
    % \begin{align}%\label{eq:discrete_spread_process_matrix}
    %     \begin{bmatrix}
    %         x^{T_{i}+1} - x^{T_{i}} \\
    %         \vdots \\
    %         x^{T_{i+1}-1}-x^{T_{i+1}-2}
    %     \end{bmatrix} = h\Phi_i \begin{bmatrix}
    %         \beta_{i} \\
    %         \gamma_{i} 
    %     \end{bmatrix}
    % \end{align}
    % where the matrix $\Phi_i$ is defined as
    % \begin{align}
    %     \Phi_i = \begin{bmatrix}
    %         (1-x^{T_{i}})x^{T_{i}} & -x^{T_{i}} \\
    %         \vdots & \\
    %         (1-x^{T_{i+1}-2})x^{T_{i+1}-2} & -x^{T_{i+1}-2}
    %     \end{bmatrix}
    % \end{align}
    Since $x^k$ and $h$, for all $k\in \{0\}\bigcup[T_{m+1}]$, are known, we can rewrite~\eqref{eq:dt_hybrid_dynamics} as
    \begin{align}\label{eq:dt_hybrid_process_matrix_extended}
        \begin{bmatrix}
            x^{1} - x^{0} \\
            \vdots \\
            x^{T_{m+1}}-x^{T_{m+1}-1}
        \end{bmatrix} = \Psi \Theta
        % \begin{bmatrix}
        %     \beta_{0} \\
        %     \gamma_{0} \\
        %     \alpha_1 \\
        %     \beta_{1} \\
        %     \gamma_1 \\
        %     \vdots \\
        %     \alpha_{m} \\
        %     \beta_{m} \\
        %     \gamma_{m} \\
        % \end{bmatrix}
        ,
    \end{align}
    where the matrix $\Psi$ is the block diagonal matrix
    \begin{align}
        \Psi = \begin{bmatrix}
            \Phi_0 & &  \mathbf{0} \\
            % & \Phi^1 & & \\
             & \ddots & \\
            \mathbf{0}  & & \Phi_{m}
        \end{bmatrix},
    \end{align}
    with $\Phi_0$ defined as 
    \begin{align}
        \Phi_0 &= \begin{bmatrix}
            h(1-x^{0})x^{0} & -hx^{0} \\
             \vdots & \vdots \\
            h(1-x^{T_{1}-2})x^{T_{1}-2} & -hx^{T_{1}-2}
        \end{bmatrix};
    \end{align}
    $\Phi_i$, for $i\in[m-1]$, is defined by~\eqref{eq:phi_n_block_def}; and
    % $\Phi_m$ is defined by~\eqref{eq:phi_m_block_def},
    $\Phi_m$ is defined as
    \begin{align}\label{eq:phi_m_block_def}
        \Phi_m = \begin{bmatrix}
            \Phi^0_m & \mathbf{0}  \\
            \mathbf{0} & \Phi^1_m
        \end{bmatrix},
    \end{align}
    where $\Phi_m^0,\Phi_m^1$ are defined as
    \begin{subequations}
    \begin{align}\label{eq:phi_m_blocks}
        \Phi_m^0 &= \begin{bmatrix}
            x^{T_m-1}
        \end{bmatrix} \\
        \Phi_m^1 &= \begin{bmatrix}
            h(1-x^{T_{m}})x^{T_{m}} & -hx^{T_{m}} \\
             \vdots & \vdots \\
            h(1-x^{T_{m+1}-1})x^{T_{m+1}-1} & -hx^{T_{m+1}-1}
        \end{bmatrix}.
    \end{align}
    \end{subequations}
    % and 
    Note that from~\eqref{eq:dt_hybrid_process_matrix_extended}, we can uniquely identify the parameters in~$\Theta$ if and only if $\Psi$ is full column rank.
    Further,
    the block diagonal matrix $\Psi$ is full column rank if and only if all block matrices are full column rank.

    The matrix $\Phi_0$ will be full column rank, and thus $\beta_0,\gamma_0$ will be uniquely identifiable, by~\cite[Thm~7]{estimation2020pare}, if and only if, $T_{1}>2$ and
    there exist $k_1, k_2\in \{0, 1, \dots, T_{1}-2\}$ such that
    \begin{equation*}
        x^{k_1}(1-x^{k_2})x^{k_2} \neq x^{k_2}(1-x^{k_1})x^{k_1}.
    \end{equation*}

    By the proof of Lemma~\ref{lem:update_interval}, we know that each matrix $\Phi_i$, for $i\in[m-1]$, will be full column rank if and only if $T_{i+1}-T_{i}>2$, 
    there exist~$k_1, k_2\in \{T_{i}, T_i+1, \dots, T_{i+1}-2\}$ such that
    \begin{equation*}
        x^{k_1}(1-x^{k_2})x^{k_2} \neq x^{k_2}(1-x^{k_1})x^{k_1},
    \end{equation*}
    and that 
    \begin{equation*}
        x^{T_i-1}\neq 0.
    \end{equation*}
    
    % \edit{We have the necessary and sufficient conditions for unique identification of the parameters $\alpha_i$, $\beta_i$, and $\gamma_i$, for $i\in[m-1]$, given by Lemma~\ref{lem:update_interval}.}\hl{Say "by the proof of lemma 1", also do $Phi_0$ first}

    % \hl{For First block, cite Phil's paper}

    % We require that $\Psi$ have a trivial nullspace in order for unique parameters to be found, which requires that $\Psi$ be full column rank.
    % The block diagonal matrix $\Psi$ is full column rank if and only if all block matrices are full column rank.

    % From Lemma~\ref{lem:update_interval}, we know that each matrix $\Phi_i$, for each $i\in[m-1]$, will be full column rank if and only if $T_{i+1}-T_{i}>2$, 
    % there exist~$k_1, k_2\in \{T_{i}, T_i+1, \dots, T_{i+1}-2\}$ such that
    % \begin{equation*}
    %     x^{k_1}(1-x^{k_2})x^{k_2} \neq x^{k_2}(1-x^{k_1})x^{k_1},
    % \end{equation*}
    % and that 
    % \begin{equation*}
    %     x^{T_i-1}\neq 0.
    % \end{equation*}

    The matrix $\Phi_m$ will be full column rank, and thus $\alpha_m$, $\beta_m$, and $\gamma_m$ will be uniquely identifiable, if and only if both $\Phi_m^0$ and $\Phi_m^1$ are full column rank.
    If $x^{T_m-1}\neq 0$, $\Phi_m^0$ has a column rank of one, with one column.
    If $x^{T_m-1}=0$, then $\Phi_m^0$ has a column rank of zero.
    Since $T_{m+1}-T_{m}>1$, $\Phi_m^1$ has at least two rows.
    By the assumption that there exists $k_1, k_2\in \{T_{m}, T_m+1, \dots, T_{m+1}-1\}$ such that~\eqref{eq:thm_m_ineq_assump} holds, $\Phi_m^1$ has a column rank equal to two with two columns.
    % Therefore there exists a unique solution to \eqref{eq:discrete_spread_process_matrix} using the inverse or pseudoinverse.
    If there do not exist $k_1, k_2\in \{T_{m}, T_m+1, \dots T_{m+1}-1\}$ such that~\eqref{eq:thm_m_ineq_assump} holds, the rank is less than two.
    Therefore, we have proved the result.

\end{proof}
% \hl{Add 1 or 2 transition sentences.}
% Now that can we exploit
If the conditions in Theorem~\ref{thm:model_estimation} are met, we can leverage
\eqref{eq:dt_hybrid_process_matrix_extended} to solve for the model parameters.
% We validate this approach using simulation data and validate the model, leveraging this approach, using real data.
We validate this technique using simulation data and then validate the model, leveraging this technique, using real data.

% \subsection{Stability}
% Stability of an epidemic process would tell us how quickly the `infected' population goes to zero.
% While the \edit{reduction in infected to $0$} is an ideal scenario when the infection is bad, in our application the `infected' population is the number of customers, which from the perspective of a company should not go to zero.
% Thus, we do not provide stability analysis of the hybrid model. % in this paper.

% However, stability analysis could tell us information such as what not to do, or how to avoid ruining the playerbase. Thus, this is something we will look at in the journal paper. % This is not going to stay in the final version, just making a note about this here.

\section{User Data}\label{sec:data}

% \hl{Making the data available and explain to someone else enough that they can understand it enough to utilize the data}

Software-as-a-service (SaaS) and live service video games, which receive regular updates and rely on recurring revenue streams, such as micro-transactions, have become dominant in the gaming industry~\cite{LELONEKKULETA2021106592,de2016analysis}.
% While player counts could be considered a proxy for profits, being able to predict peak loads on servers has other benefits for developers as well.
While not all players engage in micro-transactions, having more active players generally correlates with increased revenue. 
Predicting player counts over time can help developers optimize server capacity, preventing disruptions during user spikes that can harm player experience and profitability, while reducing resource costs spent during periods of low engagement.

The dataset employed in Section~\ref{sec:simulation} consists of the peak number of concurrent users per day for the live-service game Apex Legends\footnote{\url{https://store.steampowered.com/app/1172470/}} (Apex).
Steam, the largest video game distribution platform on computers, provides a public API\footnote{\url{https://steamcommunity.com/dev}} for tracking the number of current users in any specific game. % at the time the API is called.
Although Steam does not provide past data for player counts, historical datasets are available through community-run websites.
The most well-known site is called SteamDB. %\footnote{\url{https://steamdb.info/}}. 
On each game page, there is a plot with the historical player counts, including the daily peak number of concurrent players. %, \edit{which we use}.
A snippet of our data is shown in Figure~\ref{fig:dataset}, and the full dataset is available for download %through the plot 
on Apex Legend's SteamDB page\footnote{\url{https://steamdb.info/app/1172470/charts/}}.
% As the data is sampled directly from the source, we assume there is no observation error.
% Thus, we assume any noise in the data comes from process noise.

\begin{figure}
    \centering
    \includegraphics[width=0.98\linewidth]{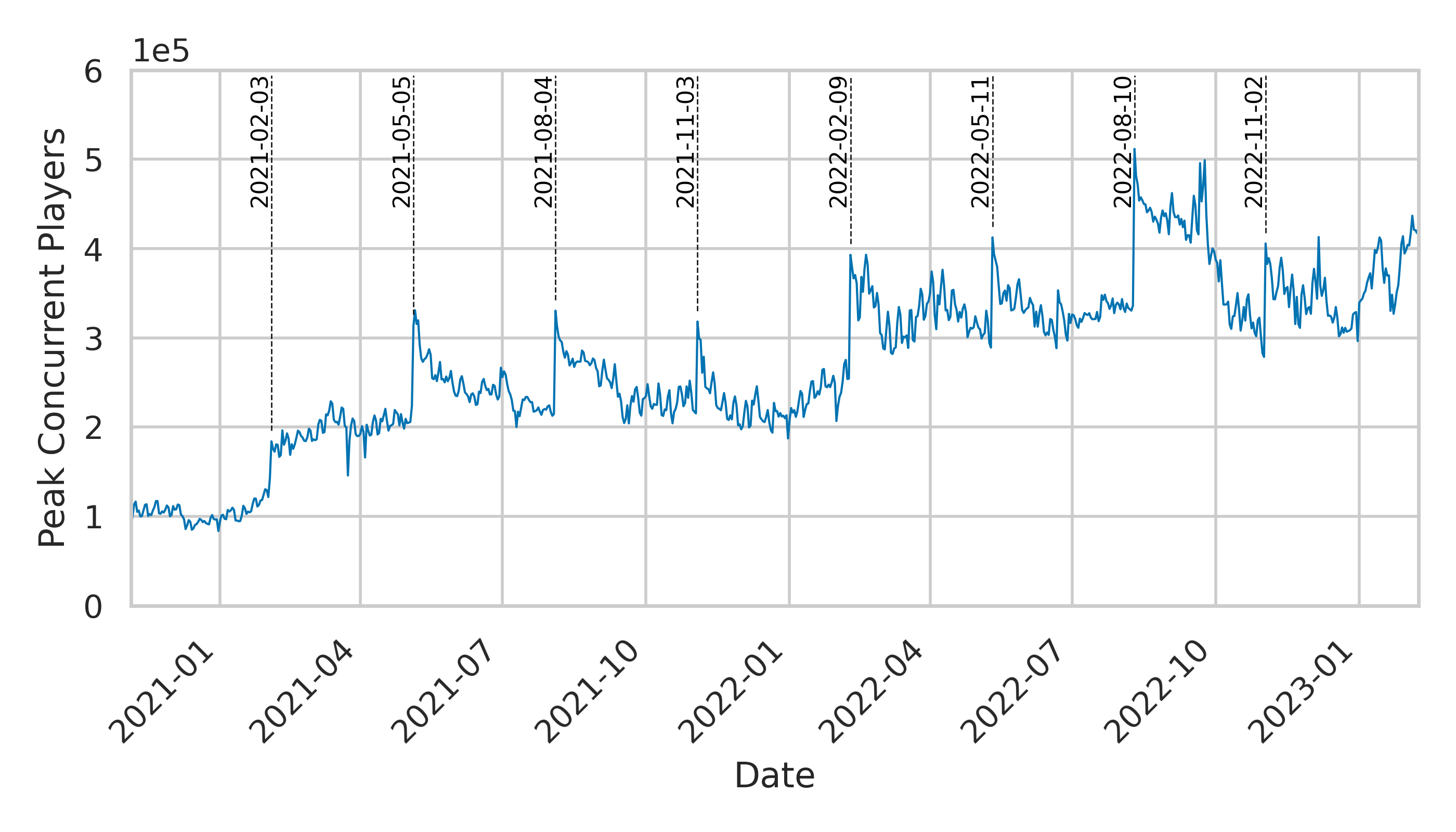}
    \caption{Player count dataset. The plot shows the peak number of concurrent players per day from November 2020, when Apex Legends was first made available through Steam, to Feb 2023. The dashed vertical lines represent the dates on which significant updates were released.}
    \label{fig:dataset}
\end{figure}

\section{Simulation}\label{sec:simulation}

The hybrid model parameters, defined in Section~\ref{sec:model}, are estimated using %the approach described in Theorem~\ref{thm:model_estimation}.
the technique from Section~\ref{sec:main_results}.
The rest of this section is structured as follows.
First, we validate our estimation technique by estimating the parameters of synthetically-generated data.
We leverage the discrete-time parameter estimation technique from Section~\ref{sec:main_results} on the continuous-time model to evaluate how the estimation technique performs.
We then apply observation noise and generate data with process noise to additionally test the robustness of the parameter estimation technique.
Finally, we validate the model on real data by estimating parameters and comparing the model fit and prediction.

\subsection{Estimation Validation}

\begin{figure}[t]
    \begin{subfigure}[t]{\linewidth}
        \centering
        \includegraphics[width=0.9\linewidth]{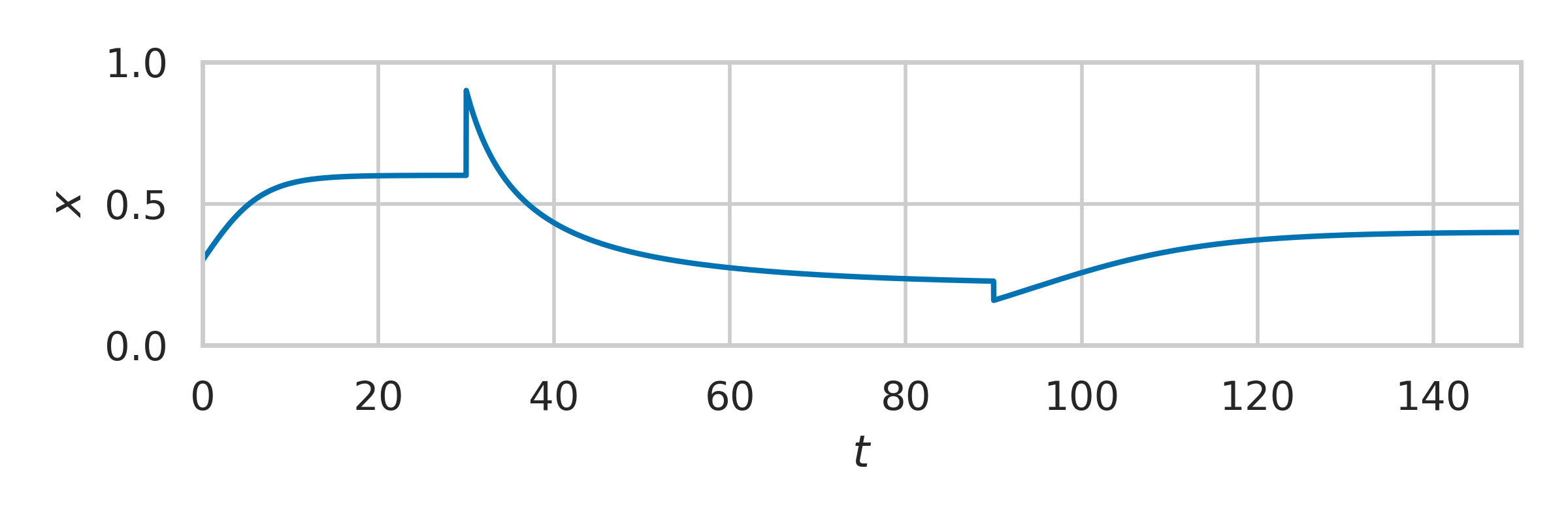}
        \vspace{-1em}
        \caption{No noise}
        \label{fig:sim_data_noiseless}
    \end{subfigure}
    \\
    \begin{subfigure}[t]{\linewidth}
        \centering
        \includegraphics[width=0.9\linewidth]{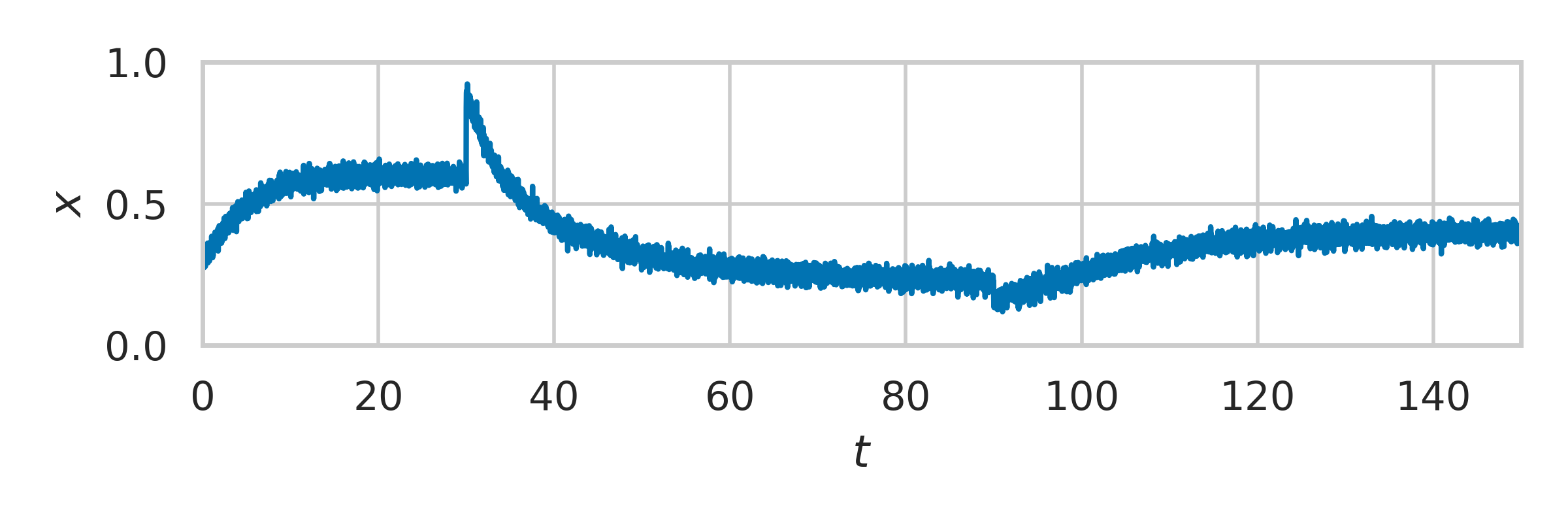}
        \vspace{-1em}
        \caption{Observation noise}
        \label{fig:sim_data_obs_noise}
    \end{subfigure}
    \\
    \begin{subfigure}[t]{\linewidth}
        \centering
        \includegraphics[width=0.9\linewidth]{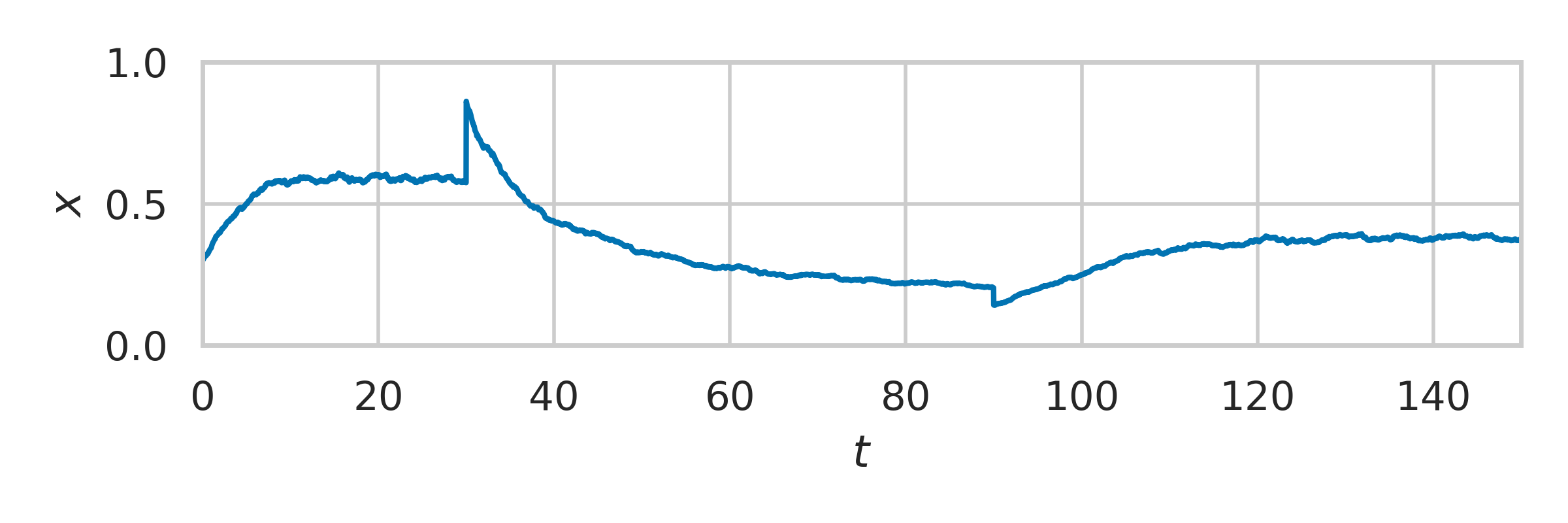}
        \vspace{-1em}
        \caption{Process noise}
        \label{fig:sim_data_proc_noise}
    \end{subfigure}
    \\
    \begin{subfigure}[t]{\linewidth}
        \centering
        \vspace{1em}
        \begin{tabular}{cccccccc}
        \toprule
        $\beta_0$ & $\gamma_0$ & $\alpha_1$ & $\beta_1$ & $\gamma_1$ & $\alpha_2$ & $\beta_2$ & $\gamma_2$ \\ 
        \midrule
        0.50 &0.20  & 0.50  &0.19 &0.15  &-0.30 &0.25 & 0.15 \\
        \bottomrule
        \end{tabular}
        \caption{Simulation parameters}
        \label{tab:sim_params}
    \end{subfigure}
    
    \caption{The simulated data from~\eqref{eq:ct_sis_dynamics}--\eqref{eq:ct_alpha_dynamics} used for validating the estimation technique. There are two update events, occurring at $t_1=30$ and $t_2=90$. The system parameters used are listed in~\ref{tab:sim_params}.
    The plot in~\ref{fig:sim_data_noiseless} is the base signal with no noise, \ref{fig:sim_data_obs_noise} is the signal with Gaussian observation noise, and~\ref{fig:sim_data_proc_noise} is the signal generated with process noise.}
    \label{fig:validation_data}
\end{figure}

\begin{figure*}[h]
    % \centering
    \begin{subfigure}[t]{0.3\textwidth}
        \includegraphics[width=\linewidth]{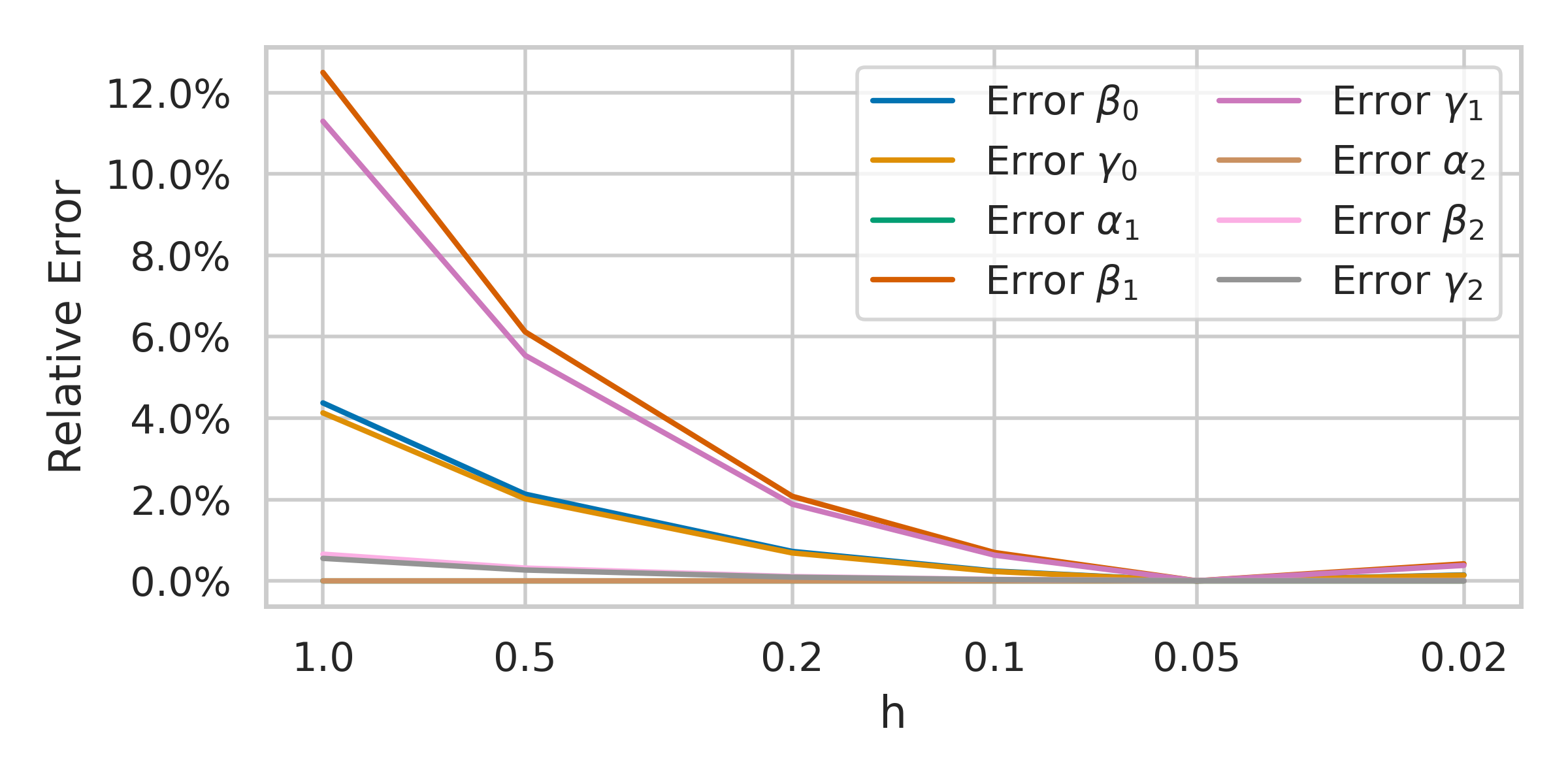}
        \caption{Parameter error with no noise}
        \label{fig:param-error-no-noise}
    \end{subfigure}
    \hfill
    \begin{subfigure}[t]{0.3\textwidth}
        \includegraphics[width=\linewidth]{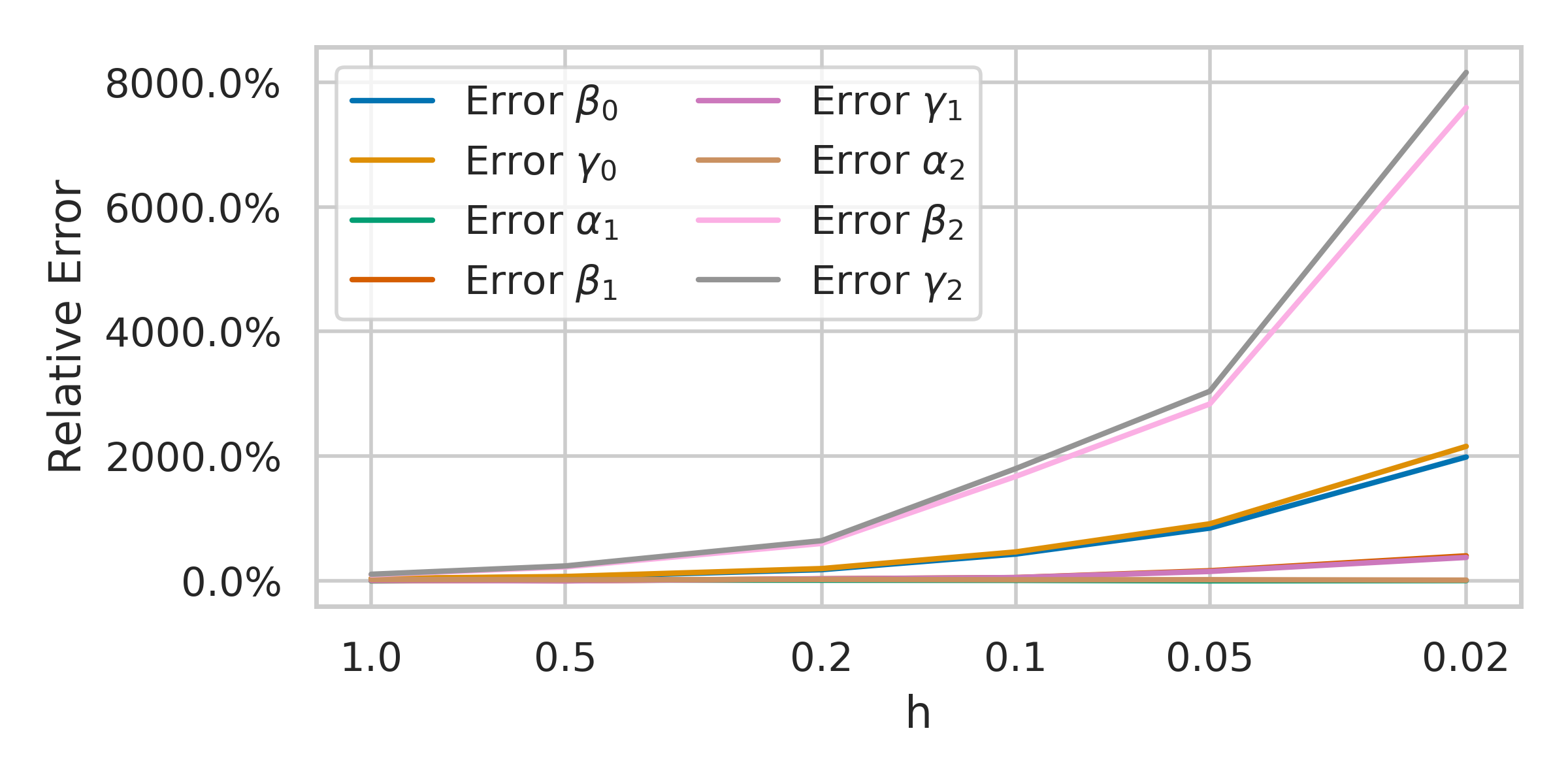}
        \caption{Parameter error with observation noise}
        \label{fig:param-error-obs-noise}
    \end{subfigure}
    \hfill
    \begin{subfigure}[t]{0.3\textwidth}
        \includegraphics[width=\linewidth]{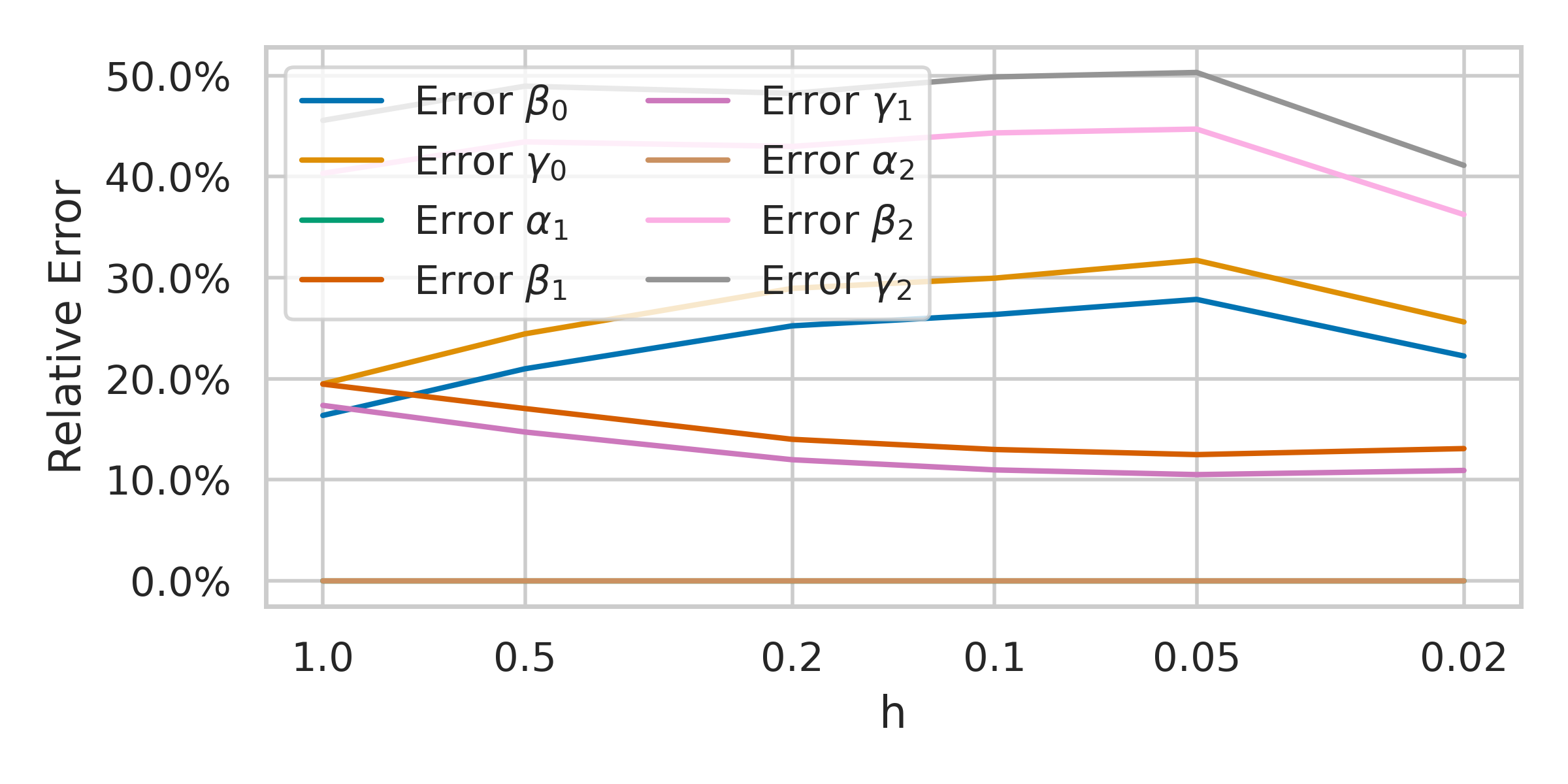}
        \caption{Parameter error with process noise}
        \label{fig:param-error-proc-noise}
    \end{subfigure}
    \\
    \begin{subfigure}[t]{0.3\textwidth}
        \includegraphics[width=\linewidth]{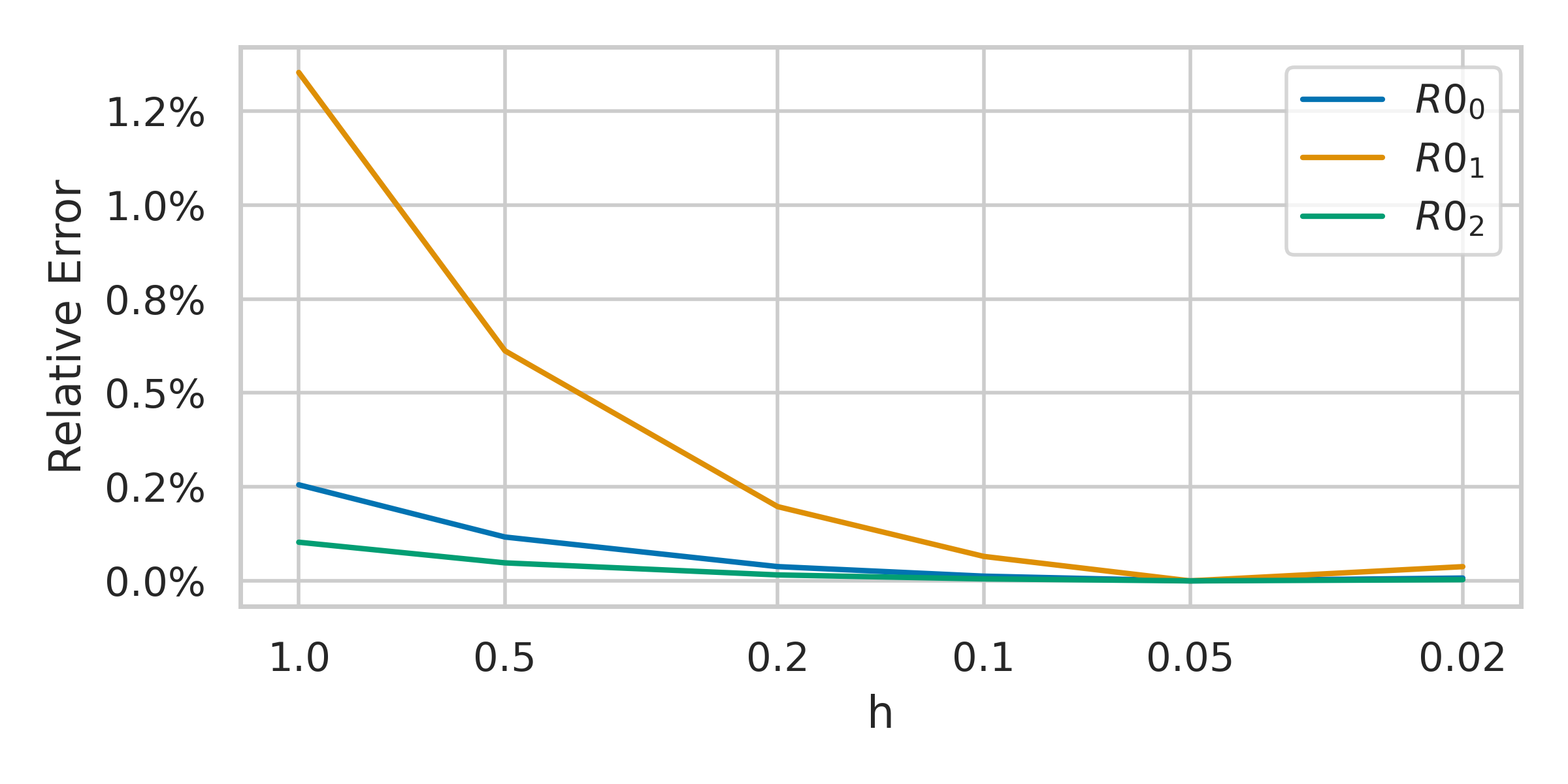}
        \caption{Reproduction number error with no noise}
        \label{fig:r0-error-no-noise}
    \end{subfigure}
    \hfill
    \begin{subfigure}[t]{0.3\textwidth}
        \includegraphics[width=\linewidth]{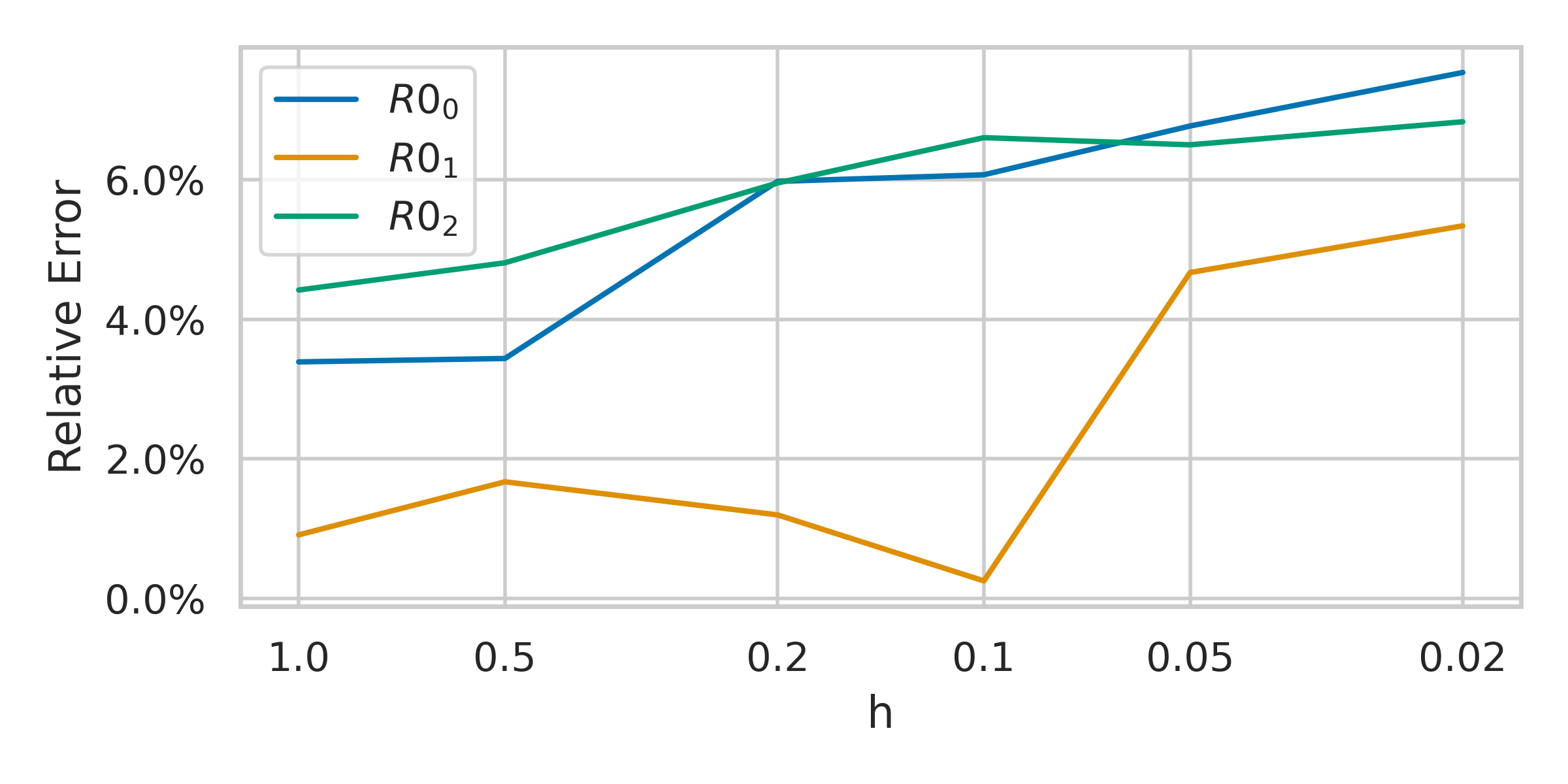}
        \caption{Reproduction number error with observation noise}
        \label{fig:r0-error-obs-noise}
    \end{subfigure}
    \hfill
    \begin{subfigure}[t]{0.3\textwidth}
        \includegraphics[width=\linewidth]{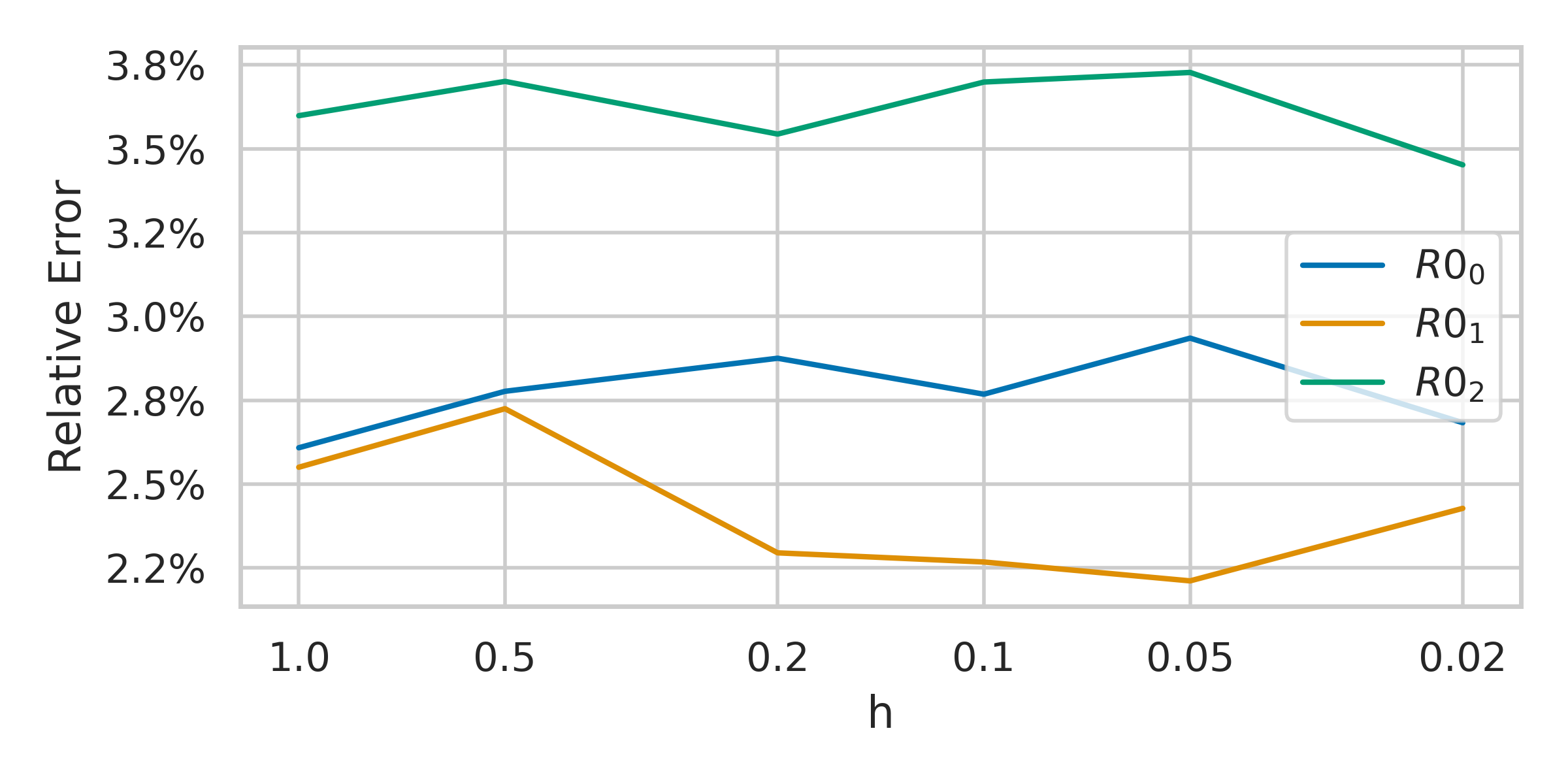}
        \caption{Reproduction number error with process noise}
        \label{fig:r0-error-proc-noise}
    \end{subfigure}

    \caption{The top row of plots shows the relative error of the estimated model parameters from the true parameters used for simulation, for each of the three simulation cases: no noise, observation noise, and process noise. The bottom row of plots shows the relative error of the reproduction number for each interval. Without any noise, the estimated parameters and reproduction numbers improve as the step size $h$ decreases. While both observation and process noise reduce the accuracy of the estimation of the true system parameters, the reproduction numbers of the estimated parameters remain close to the true reproduction number.}
    \label{fig:param_estimation_figs}
\end{figure*}

We apply our estimation technique to three cases of simulated data: (1) data generated without noise, (2) data with additive Gaussian observation noise, and (3) data with process noise.
The simulated data are generated using the continuous-time hybrid system model in~\eqref{eq:ct_sis_dynamics}--\eqref{eq:ct_alpha_dynamics}, and plots of the generated data are displayed in Figure~\ref{fig:validation_data}.
There are $m = 2$ update events, occurring at $t_1 = 30$ and $t_2 = 90$.
The simulation parameters are listed in Figure~\ref{tab:sim_params}.

For the estimation technique, we use the discrete-time approximation of the hybrid model and evaluate the estimation error for multiple step sizes $h$.
% The accuracy of the parameter estimation depends on the chosen time-step interval $h$; therefore, we evaluated the estimation error for multiple step sizes $h$.
In the SIS model, the magnitudes of $\beta$ and $\gamma$ influence how quickly the state converges to equilibrium, while their ratio, defined as the reproduction number $R_0 = \beta / \gamma$, determines the value of the equilibrium.
% \edit{Different pairs of $\beta$ and $\gamma$ which result in the same $R_0$ will }
% The equilibrium state that $x$ approaches is computed as
% \begin{equation}
%     x_e = 1 - \frac{1}{R_0}.
% \end{equation}
% For example, if we have two pairs of parameters $(\beta_1, \gamma_1) = (0.5, 0.2)$ and $(\beta_2, \gamma_2) = (5, 2)$, the state $x$ approaches the same equilibrium value, but the system reaches it much faster with $(\beta_2, \gamma_2)$ than with $(\beta_1, \gamma_1)$.
Therefore, we present both the parameter estimation accuracy and the accuracy of the reproduction number for each update interval in Figure~\ref{fig:param_estimation_figs}.
Further details and observations relevant to the estimation for each simulated dataset are provided in the following subsections.

\subsubsection{Noiseless Data}\label{sec:sim_valid_noiseless}
We first estimate the true parameters using the noiseless simulation data.
As shown in Figures~\ref{fig:param-error-no-noise} and~\ref{fig:r0-error-no-noise}, the error in the estimated parameters and reproduction numbers decreases as the time step $h$ decreases.
This result aligns with expectations because smaller time steps provide a more accurate approximation of the continuous-time model.
For instance, when $h = 1$, the estimated reproduction number deviates by no more than $1.25\%$.
However, there is a marginal increase in error when the step size decreases from $0.05$ to $0.02$.
This can be attributed to numerical errors arising from the computation of the pseudoinverse, where the sparsity of the matrix can lead to small inaccuracies.

After confirming that the estimation method performs well under ideal conditions, we next evaluate how accurately the parameters can be estimated in the presence of noise.

\subsubsection{Observation Noise}\label{sec:sim_valid_obs_noise}
To evaluate the effect of observation noise, Gaussian noise with a standard deviation of $\sigma = 0.02$ was added to the simulation data.
The estimation error results are shown in Figures~\ref{fig:param-error-obs-noise} and~\ref{fig:r0-error-obs-noise}.
We observed that adding uncorrelated observation noise resulted in increased estimation error for both the model parameters and reproduction numbers, especially as the step size $h$ decreased.
However, the estimated reproduction number error remains below $8\%$ for all cases, with the largest error occurring at $h = 0.02$.

The estimation accuracy decreased as the time required for the state to reach the equilibrium decreased, indicating greater sensitivity to observation noise when convergence is faster.
For example, if the parameters $\beta_1$ and $\gamma_1$ are changed from $0.19$ and $0.15$, respectively, to $1.9$ and $1.5$, the reproduction number and the equilibrium value remain the same, but the estimation of $\beta_1$ and $\gamma_1$ becomes more sensitive to the noise.
This increased sensitivity occurs because fewer data points are available in the transient period when convergence is faster.
Additionally, the parameters are highly sensitive to noise when the magnitude of the update engagement spike is small relative to the observation noise.

We next evaluate the impact of process noise on parameter estimation.

\begin{figure}
    \centering
    \includegraphics[width=\linewidth]{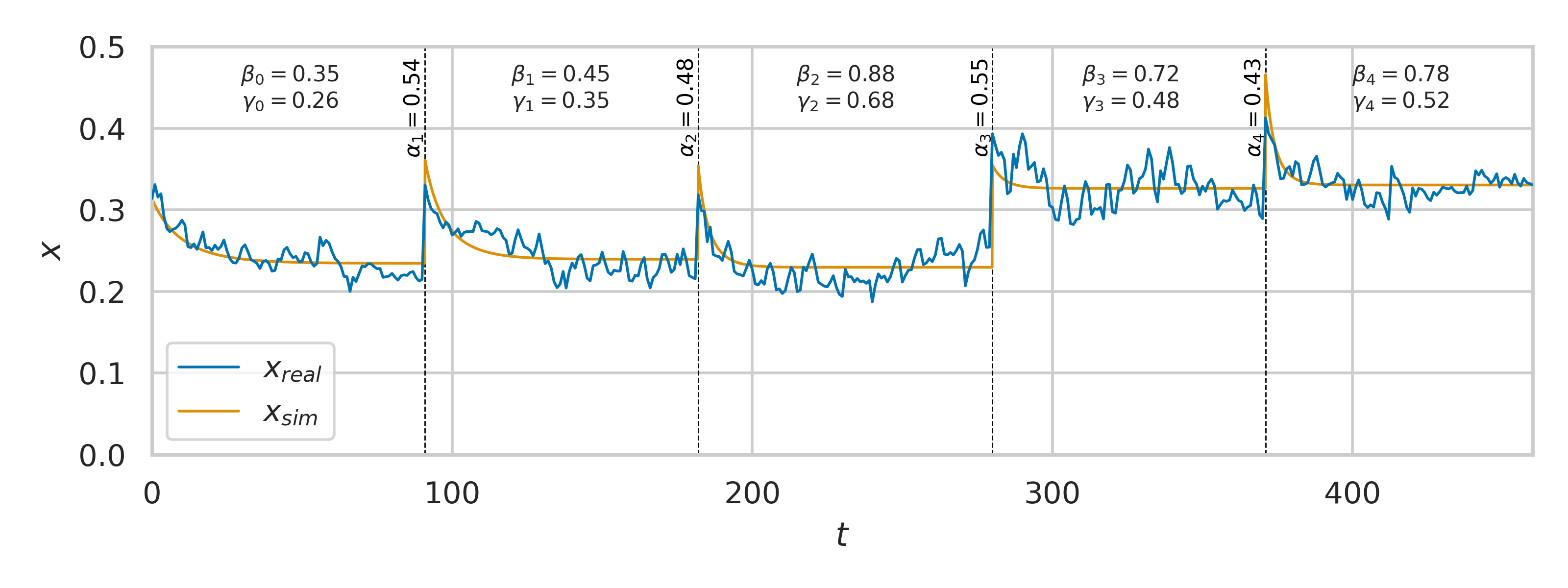}
    \vspace{-1em}
    \caption{Real data and simulation results. The blue line is the real data, and the orange line is the simulated data. The estimated parameters used to generate the simulated data are listed at the top of the plot. The overall fit of the simulated data to the real data is good, but the simulated demand spikes either over or undershoot the real demand.}
    \label{fig:real_data_sim_comparison}
\end{figure}

\subsubsection{Process Noise}\label{sec:sim_valid_proc_noise}
To assess the effect of process noise, we add noise directly to the system states and evaluate the accuracy of the parameter estimation.
The system with process noise is defined as
\begin{equation}\label{eq:process_noise}
    dx = \beta_i (1 - x) x \, dt - \gamma_i x \, dt + \sigma x \, dW,
\end{equation}
where $W$ represents the standard Wiener process.
The noise magnitude is scaled by the current state $x$, and the noise strength is set to $\sigma = 0.02$.

Our results show that the accuracy of parameter estimation with process noise did not exhibit a clear relationship with the step size $h$.
Despite the introduction of noise, the estimated parameters remained within $50\%$ of the true parameter values for all tested step sizes.
Similarly, the reproduction numbers were estimated within $4\%$ of the true values across all cases.

% \subsubsection{Summary of Estimation Evaluation}
% Our estimation technique demonstrates robust performance under various conditions.
% In the absence of noise, parameter estimation accuracy improves with smaller time steps, as expected.
% The presence of observation noise increases estimation errors, particularly for smaller time steps and faster convergence rates due to the reduced number of informative data points during the transient phase.
% Process noise introduces variability but does not significantly degrade the estimation of the reproduction number.
% % Additionally, step size does not appear to matter for parameter estimation accuracy when process noise is present. 
% Overall, the reproduction number is estimated with reasonable accuracy across all scenarios, indicating the effectiveness of the estimation method for capturing essential system dynamics even in the presence of noise.
% These results provide confidence that our parameter estimation technique can be leveraged to learn real system behavior.

\subsection{Model Validation with Real Data}

After validating the estimation method and addressing practical challenges, we proceed to estimate parameters using real-world data, presented in Section~\ref{sec:data}, in order to validate our proposed model.
% We must assume a total population size for estimation and simulation purposes.
% Given that the maximum number of concurrent players is approximately $500{,}000$, 
% We assume a total population size of $N = 1{,}000{,}000$.
We estimate the parameters for multiple updates simultaneously, assuming a total population size of $N = 1{,}000{,}000$, and use these parameters to generate a fit for the data using our hybrid model.
% We then look at each interval in isolation, and evaluate how the prediction accuracy of the proposed model on this real-world system changes as the training length increases.

% \subsubsection{Multi-Update Estimation}

The estimated parameters, and resulting simulation, over five intervals are visible in Figure~\ref{fig:real_data_sim_comparison}, where we compare the simulation results with the original data.
In the simulation, we used only the estimated parameters and the initial value of the data.
The results show a good fit, especially during the first three intervals.
However, starting from Interval 3, the simulated demand reaches the equilibrium more quickly than the actual data, possibly due to an underestimation of the impulse response.
Furthermore, at Update 4, the simulated impulse magnitude is significantly larger than in the real data, suggesting a potential overestimation.
% \edit{
These discrepancies suggest weekend/weekday patterns introduce noise in the $\alpha_i$ estimation, which relies on just two data points. 
% The reason This contrasts with the simulated data where the noise was smaller and uncorrelated.
% }
% These discrepancies may be due to the estimation method for $\alpha_i$.
% Specifically, $\alpha_i$ is estimated from two data points: the user count the day before an update is released and the user count on the release day.
This limited data makes $\alpha_i$ more sensitive to noise compared to the estimates for $\beta_i$ and $\gamma_i$.

% Daily fluctuations in the peak user count before and after an update skew the estimation of $\alpha_i$. %, and the peak user count following an update can occur one or two days after its release.
% A closer examination of the dataset reveals that most updates occur at the beginning of a weekend, when user activity is naturally higher.
% Conversely, the day before an update is often a weekday, when fewer users are active.
% This pattern often results in a decline in player count just before the update, affecting the estimation of $\alpha_i$.
% We did not observe these difficulties with estimating $\alpha_i$ in  Section~\ref{sec:sim_valid_proc_noise} for two reasons: the strength of the process noise used was relatively small, and the process noise was uncorrelated.
% There was not a regular pattern of the simulated state decreasing substantially prior to the update, unlike in the dataset. 

Overall, the simulation provided a good fit for the data, and instills confidence that the model in~\eqref{eq:ct_sis_dynamics}--\eqref{eq:ct_alpha_dynamics} can capture real demand behavior.
However, this fit assumes all data was known, while a dynamic demand model would likely be used in an online fashion to predict future demand. %the insight provided by a dynamic demand model is most useful when predicting future demand.
To evaluate how well this model would perform the task of predict future demand, we next evaluate how much data is needed to predict the demand within each update interval.
% \hl{Add a transition couple of sentences here which explains the primary goal of the prediction section}
The primary goal of doing these predictions is not to derive theoretical conclusions, but to identify practical considerations for using our estimation technique in real-world applications.

\section{Conclusion}\label{sec:conclusion}

We proposed a novel hybrid system model that combines continuous-time epidemic dynamics with discrete impulse events.
We presented necessary and sufficient conditions for estimating the parameters of a discrete-time approximation of the hybrid model.
We validated the estimation technique on synthetic data across multiple resolutions of discrete-time approximations, and analyzed the effects of observation and process noise on the parameter estimation technique.
Finally, we applied the estimation method to real user count data, demonstrating the model's ability to fit real-world demand patterns.

Our simulations show that the hybrid model effectively captures the dynamics of user engagement in systems influenced by discrete events, such as updates.
The estimation method exhibits robustness to noise and provides reasonable accuracy in estimating key features, in particular the reproduction number.
However, practical application to real data revealed challenges, particularly in estimating impulse parameters due to limited data points and daily fluctuations in user activity.
These findings highlight the importance of considering data variability and event timing when using the model for prediction.

Future research could focus on enhancing the estimation techniques to better handle sparse or noisy data, possibly by incorporating filtering methods or Bayesian approaches.
Extending the model to account for weekly or seasonal patterns, user heterogeneity, or incorporating control inputs could improve its predictive capabilities.
Additionally, developing methods to estimate impulse parameters more accurately, perhaps by utilizing additional data sources or machine learning techniques, could further enhance the model's utility in real-world applications.

% \addtolength{\textheight}{-12cm}   % This command serves to balance the column lengths
                                  % on the last page of the document manually. It shortens
                                  % the textheight of the last page by a suitable amount.
                                  % This command does not take effect until the next page
                                  % so it should come on the page before the last. Make
                                  % sure that you do not shorten the textheight too much.

\bibliographystyle{IEEEtran}
\bibliography{refs-acc}

\end{document}